\documentclass[letterpaper, 10 pt, conference]{ieeeconf}  

\IEEEoverridecommandlockouts                              

\usepackage{graphics} 
\usepackage{epsfig} 
\usepackage{mathptmx} 
\usepackage{times} 

\usepackage{amsmath} 
\usepackage{amssymb}  
\usepackage{color}
\usepackage{graphicx}
\usepackage{float}

\newtheorem{theorem}{Theorem}

\title{
PESTO: Formally Correct Registration of LiDAR Point Clouds\\with Limited Overlap
}

\author{Valen Yamamoto$^{1}$, Matteo Marchi$^{1}$, and  Paulo Tabuada$^{1}$
\thanks{*This material is based upon work supported by the National Science Foundation Graduate Research Fellowship Program under Grant No. DGE-2444110 and DGE-2034835, and by NSF project 2211146. Any opinions, findings, and conclusions or recommendations expressed in this material are those of the author(s) and do not necessarily reflect the views of the National Science Foundation.}
\thanks{$^{1}$Valen Yamamoto, Matteo Marchi, and Paulo Tabuada are with the Electrical and Computer Engineering Department, University of California at Los Angeles, Los Angeles, CA 90095 USA (e-mail: {\tt\small \{valenyamamoto, matmarchi, tabuada\}@ucla.edu}).}%
}

\begin{document}

\maketitle
\thispagestyle{empty}
\pagestyle{empty}

\begin{abstract}
In this paper we tackle the problem of aligning LiDAR point clouds also known as the point cloud registration problem. We propose a new algorithm, PESTO, that exploits tetrahedra as ``universal features” for LiDAR data, i.e., features that are agnostic to the environment where the LiDAR sensors are deployed. We show empirically that PESTO is competitive with existing solutions for aligning LiDAR point clouds, especially in environments with occlusions. Moreover, we establish PESTO’s formal correctness by proving worst-case bounds on the alignment error. 
\end{abstract}

\section{Introduction}

With the increasing deployment of autonomous vehicles in traffic with human drivers, it is critical that algorithms within the self-driving autonomy stack provide performance and safety guarantees. Point cloud registration, the problem of finding the rigid transformation that aligns two point clouds, is an integral part of LiDAR localization algorithms. When these algorithms are used on real-world LiDAR data, we need to contend with several additional challenges.

First, the point clouds being aligned do not entirely describe the same area due to occlusions. In other words, the subset of the environment that is visible in both point clouds, that we term the \textit{common area}, is a strict subset of the point clouds. Implicitly or explicitly, each alignment algorithm identifies the common area, and this problem is combinatorial in nature as one must search over all possible subsets.

Second, the large number of points produced by 3D LiDAR sensors renders algorithms that work at the level of individual points too inefficient. To circumvent this, feature extraction methods were developed to distill large point clouds into a smaller sets of distinct features like corners and edges. However, these methods either require prior knowledge of the size and type of features prevalent in the environment or are data-driven deep learning solutions, both of which are not robust to changes in the environment.

Third, most point cloud registration algorithms fail silently, i.e., these algorithms produce correct and wildly incorrect estimates with the same confidence, unable to provide a measure of how well they have performed. With the addition of feature extraction methods, even point cloud registration algorithms with performance guarantees cannot provide an end-to-end bound on the quality of the produced rigid transformations. In particular, data-driven neural network solutions, often overfit to their training data, can fail unpredictably and provide a dramatically incorrect rigid transformation estimate for a pair of point clouds that are similar to a pair they correctly align; an example of this phenomenon can be seen in Figure \ref{fig:kitti_example}. As LiDAR point cloud registration is increasingly being deployed in safety-critical applications, it is crucial that alignment algorithms determine the uncertainty in their estimates.

Motivated by these difficulties, we propose PESTO\footnote[2]{\textbf{P}ASTA \textbf{E}mploying \textbf{S}implex \textbf{T}ransformation \textbf{O}bjective}, a point cloud registration algorithm with formal performance guarantees designed to work on LiDAR data with small common areas.

\begin{figure}[hbt!]
    \centering
    \includegraphics[width=\linewidth]{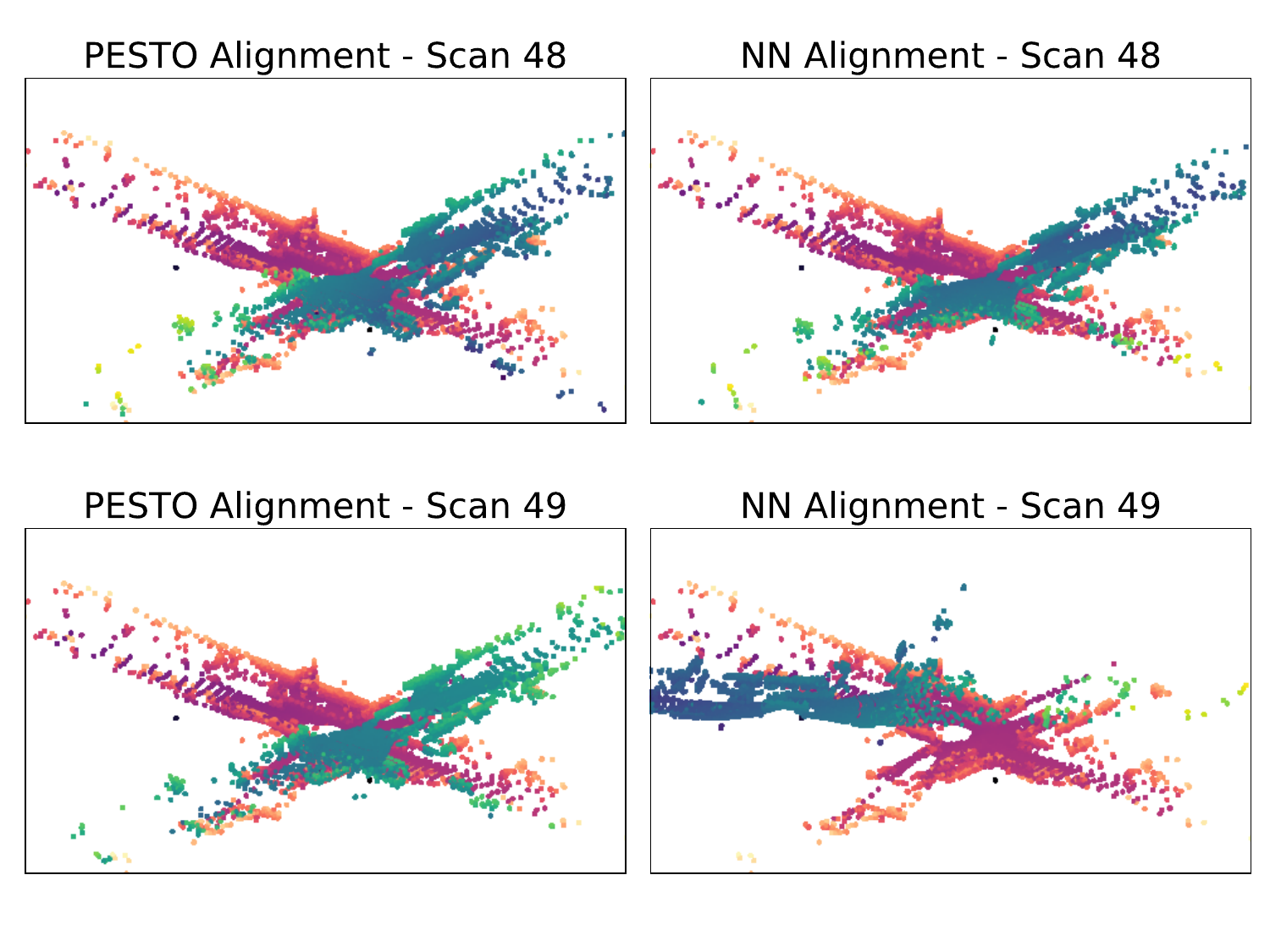}
    \caption{A visualization of alignments provided by PESTO and the Neural Network (NN) GeoTransformer for two pairs of LIDAR scans, one in red and the other in green, from the KITTI Odometry Benchmark, collected as the car makes a left turn at an intersection. Both tools correctly align scan 48, but GeoTransformer provides a very incorrect transformation for scan 49, even though it covers almost the same area as scan 48.}
    \label{fig:kitti_example}
\end{figure}

Our contributions are as follows:
\begin{enumerate}
    \item The PESTO algorithm, which uses Delaunay Tetrahedralization to efficiently compute the rigid transformation that best aligns the common area of two point clouds;
    \item Formal worst-case guarantees on the alignment error when using PESTO's rigid transformation estimate;
    \item An empirical comparison of PESTO to existing point cloud registration tools and a validation of PESTO's worst case guarantees on real world LiDAR data.
\end{enumerate}

\section{Related Work}
Point cloud registration algorithms fall into two categories: correspondence-based and correspondence-free. Correspondence-based algorithms take as input a set of corresponding features from both point clouds. Horn \cite{horn} provided a closed-form solution for the best, in a least-squares sense, rigid transformation between two sets of points. This method, however, is sensitive to incorrect (outlier) correspondences; to combat such correspondences, robust correspondence-based registration algorithms like RANSAC \cite{ransac} and other maximization consensus variants \cite{consensus} first find a set of correct (inlier) correspondences and then calculate their transformation estimates based on the correct correspondences. Correspondences are obtained using tools such as Fast Point Feature Histograms (FPFH) \cite{fpfh} or probabilistic methods like the normal distribution transform \cite{ndt} to find corresponding features. Correspondence-free algorithms simultaneously solve for the correspondences and the transformation. ICP \cite{icp} and its variants minimize a cost function that works directly with the points of the point cloud. Since they employ nonlinear optimization methods, they are only guaranteed to find local solutions close to an initial guess. Algorithms that are guaranteed to find the globally optimal solution, like Go-ICP \cite{goicp}, are too computationally expensive to run in real time. To run either correspondence-based or correspondence-free algorithms under real-time constraints, methods to downsample or distill the raw LiDAR point clouds into a smaller point cloud are needed to reduce amount of computations \cite{downsample}.

With the advent of neural network architectures which work directly on point clouds, deep learning-based tools for point cloud registration have become increasingly popular and have shown great accuracy on many benchmarks. In particular, the tools Predator \cite{predator} and GeoTransformer \cite{geotransformer} have shown great accuracy in aligning point clouds with low overlap; however, these algorithms suffer from the same drawbacks that all deep learning perception methods are subject to. They require very large amounts of data to train and do not generalize to different environments and LiDAR models, requiring users to collect and label a glut of data to re-train the network for every new environment they expect to use the neural network in \cite{deep_learning_survey}. While neural networks outperform model-based methods on benchmarks, they tend to fail in a sporadic, uninterpretable manner, making them an undesirable option for safety-critical systems.

Few point cloud registration tools provide performance guarantees. Tools that return the globally optimal transformation such as Go-ICP, some maximization consensus algorithms, and CGA \cite{cga}, have guarantees on returning the transformation that minimizes their cost functions, but their minima may not correspond to the estimate with the correct transformation. TEASER \cite{teaser} and PASTA \cite{pasta} provide worst-case performance guarantees on their estimates but the quality of these guarantees is conditional on assumptions that limit their use on real-world data. 

\section{Preliminaries}
Denote the set of real numbers by $\mathbb{R}$, the set of nonnegative real numbers by $\mathbb{R}_{\ge0}$, the set of strictly positive real numbers by $\mathbb{R}_{>0}$, and the set of natural numbers starting at $1$ by $\mathbb{N}$.

Given a set $P$, denote its boundary by $\partial P$, its volume by $\text{Vol } P$, and its  cardinality by $\vert P\vert$.  We use $B_\epsilon(x) = \left\{y \in \mathbb{R}^n \vert \Vert x-y \Vert < \epsilon\right\}$ to denote the ball of radius $\epsilon \in \mathbb{R}_{>0}$ centered at $x\in \mathbb{R}^n$. 
The first moment of a finite set $P \subset \mathbb{R}^n$ is given by $\mu(P)=\frac{1}{\vert P\vert }\sum_{x\in P} x$, and its second moment by $\Sigma(P)=\frac{1}{\vert P \vert }\sum_{x\in P} \left(x-\mu(P)\right)\left(x-\mu(P)\right)^\intercal$ where $(\cdot)^\intercal$ denotes matrix transposition.

The symbol $I$ represents the identity matrix. A matrix $M\in\mathbb{R}^{n\times n}$ is orthogonal if $M^\intercal  M=I$. The set of all such $n\times n$ matrices forms a group under matrix multiplication, called the orthogonal group, which we denote by $O(n)$. If we add the additional constraint that the determinant of $M$ must be $1$ we have the special orthogonal group, denoted by $SO(n)$.

A rigid transformation $(R, \rho)$ consists of a rotation matrix $R \in SO(n)$ and a translation vector $\rho \in \mathbb{R}^n$. A vector $x\in \mathbb{R}^n$ is transformed by $(R, \rho)$ into the vector $Rx + \rho\in \mathbb{R}^n$.

\section{Problem Statement}

In this paper, we consider the problem of aligning point clouds generated by a 3D LiDAR sensor. A LiDAR sensor's measurement, called a scan, consists of a set of distance measurements from where the sensor is placed in the environment to various other objects around it. Hence, scans depend on the sensor's location and orientation in the environment. We formally define
the pose of a LiDAR sensor as the position and orientation of a frame attached to it with respect to a frame of reference. We represent the pose of a LiDAR by the rigid transformation mapping the reference frame to the LiDAR's frame.
A LiDAR scan at a pose $(R, \rho)$ is a finite set of pairs $(d, v)\in \mathbb{R}_{\ge0}\times \mathbb{R}^3$, where $d$ is a distance measurement and $v$ is a unit vector in the direction of the measurement. Given a LiDAR scan $S$ at a pose $(R, \rho)$, we can construct the corresponding point cloud $P\subset \mathbb{R}^3$:
$$
P = \left\{dv \in \mathbb{R}^3 | (d, v) \in S\right\}.
$$
It will be convenient to apply a rigid transformation $(R,\rho)$ not to a point but to a point cloud $P\subset \mathbb{R}^3$. We denote such application by $R(P)+\rho$ and define it as:
$$
R(P) + \rho = \left\{y\in \mathbb{R}^3\,\, \vert \,\,y=Rx+\rho,\quad x\in P\right\}.
$$

Consider two point clouds $P_1, P_2\subset \mathbb{R}^3$ generated from a LiDAR sensor at poses $p_1, p_2 \in SO(3) \times \mathbb{R}^3$, respectively, where $(R, \rho) \in SO(3) \times \mathbb{R}^3$ is the rigid transformation from $p_1$ to $p_2$. The objective of this paper is to solve the following problem:

\begin{center}
\textit{Given $P_1$ and $P_2$,  estimate $(R, \rho)$\\and provide worst-case error bounds on the estimate.
}
\end{center}

In the following section we describe the architecture and logic of our algorithm.

\section{PESTO Algorithm}
Given two point clouds, PESTO uses the Delaunay Tetrahedralization to efficiently sift through possible point-to-point correspondences from the first point cloud to the second. This search is conducted, not at the level of individual points, but at the level of tetrahedra for two reasons:
\begin{enumerate}
    \item A tetrahedron is defined by four points, which is the minimum number of points needed to uniquely define a rigid transformation between the point clouds;
    \item The tetrahedra constructed by the Delaunay Tetrahedralization only depend on neighboring points. This means that tetrahedra will not change when a point cloud is altered away from a tetrahedron.
\end{enumerate}
These observations suggest that tetrahedra represent local patterns and are a middle ground between fully processing the data to extract specific features and working directly with the points. Moreover, the ``features'' defined by tetrahedra are universal in the sense they do not depend on any knowledge about the environement where the measurements are taken.

Before describing the key steps in PESTO, we introduce some notation and concepts related to tetrahedra.

\subsection{Delaunay Tetrahedralization}

A tetrahedron $\Delta$ is a polyhedron in $\mathbb{R}^3$ consisting of four vertices and six edges with triangular facets. 
Given an ordering for the vertices, we denote the $i$th vertex of $\Delta$ by $v_i(\Delta)$. The length of the edge connecting the $i$th and $j$th vertices is given by $s_{ij}(\Delta)=s_{ji}(\Delta)=\Vert v_i(\Delta) - v_j(\Delta) \Vert_2$. For a tetrahedron with vertices $i,j,k,l$, we define the gap value for $v_i$ as $g_i(\Delta)=\min_{i\ne j\ne k}\{\vert s_{ij}(\Delta)-s_{ik}(\Delta) \vert\}$ and represents the smallest difference between the lengths of edges with $v_i$ as a vertex. We define the signed volume of a tetrahedron as $\mathrm{SVol}(\Delta)=\frac{1}{4}\text{det}\left(\begin{bmatrix}
    v_1(\Delta) - v_2(\Delta) & v_1(\Delta) - v_3(\Delta) & v_1(\Delta) - v_4(\Delta)
\end{bmatrix}\right)$. The centroid of $\Delta$ is defined as the first moment of the set consisting of the tetrahedron's vertices and is denoted by $\mu (\Delta)$.
A tetrahedron's circumsphere is a sphere whose boundary contains all the tetrahedron's vertices.

Given a set of points $Q\subset \mathbb{R}^3$ and a tetrahedron whose vertices are points of $Q$, the tetrahedron satisfies the empty circumsphere property if there are no points of $Q$ in the interior of the tetrahedron's circumsphere. We  utilize the Delaunay tetrahedralization, which is a generalization of the Delaunay triangulation for sets of three-dimensional points. The Delaunay tetrahedralization of a set of points $Q\subset \mathbb{R}^3$, denoted by $DT(Q)$, is a set of tetrahedra with points of $Q$ as vertices such that all tetrahedra in $DT(Q)$ satisfy the empty circumsphere property. 

\subsection{Key Steps in PESTO}
Starting from the Delaunay Tetrahedralization of the first point cloud, PESTO finds tetrahedra in the second that are similar to ones in the first point cloud to create pairs of matched tetrahedra. The pairs of similar tetrahedra are then used to compute the transformation that maximizes the common area between point clouds. 

\subsubsection{Step 1: Calculating Tetrahedron Vectors}
Given two point clouds $P_1$ and $P_2$, PESTO takes the Delaunay Tetrahedralization of each. To each tetrahedron $\Delta$, we assign a tetrahedron vector, defined as: 
\begin{multline*}
    \overrightarrow{\Delta} = \big(s_{12}(\Delta), s_{13}(\Delta),  s_{14}(\Delta), \\
     s_{23}(\Delta), s_{24}(\Delta), s_{34}(\Delta), \text{sign}\left(\mathrm{SVol}(\Delta)\right) \times \max_{i\ne j}s_{ij}(\Delta) \big).
\end{multline*}
where the vertices of $\Delta$ are ordered by gap value, i.e., for any vertices $i$ and $j$, $i<j$ implies $g_i(\Delta) \ge g_j(\Delta)$.
PESTO filters the tetrahedra to keep only those with gap sizes above a certain user-set threshold $\zeta\in \mathbb{R}_{> 0}$ and collects these in the sets $F(P_1)$ and $F(P_2)$

\subsubsection{Step 2: Matching by Tetrahedron Vectors}
From the pairs of tetrahedra $(\Delta_1, \Delta_2)\in F(P_1) \times F(P_2)$, PESTO collects the $m\in \mathbb{N}$ pairs that have the smallest distance between their tetrahedron vectors $\Vert \overrightarrow \Delta_1 - \overrightarrow \Delta_2 \Vert$ in the set $M_m(P_1, P_2)$. On each of the pairs in $M_m(P_1, P_2)$, PESTO uses Horn’s Method \cite{horn} to calculate the rigid transformation that aligns them.

\subsubsection{Step 3: Filter  Pairs by Heuristic}
For each pair of tetrahedra $(\Delta_1, \Delta_2)\in M_m(P_1, P_2)$ and its associated rigid transformation $(\Tilde R, \Tilde \rho)$, PESTO evaluates a heuristic on a small ball around each tetrahedron: $N_1=B_r(\mu(\Delta_1)) \cap P_1$,  $N_2=B_r(\mu(\Delta_2)) \cap P_2$, where $r\in \mathbb{R}_{>0}$ is chosen such that $\Delta_1 \subseteq  B_r(\mu(\Delta_1))$ and $\Delta_2 \subseteq  B_r(\mu(\Delta_2))$. PESTO evaluates the following heuristic function on $N_1$ and $N_2$:
\begin{multline}
H(\Delta_1, \Delta_2)=\left\Vert \mu(N_2) – (\Tilde R(\mu(N_1)) + \Tilde \rho) \right\Vert^2_2 + \\\nu \left\Vert \Sigma(N_2) - \Tilde R \Sigma(N_1) \Tilde R^\intercal \right\Vert_F  + \eta \left\Vert \overrightarrow{\Delta_1} - \overrightarrow{\Delta_2} \right\Vert_2,
\label{heuristic}
\end{multline}
where $\nu,\eta \in \mathbb{R}_{> 0}$ are design parameters, and collects the $k\in \mathbb{N}$ pairs with the smallest value of $H$ in the set $M_k^H(P_1, P_2)$.

\subsubsection{Step 4: Cost function}
For each pair of tetrahedra $(\Delta_1, \Delta_2)\in M_k^H(P_1, P_2)$ and its associated rigid transformation $(\Tilde R, \Tilde \rho)$, PESTO calculates its common area, defined as:
\begin{equation}
C\left(\Tilde R, \Tilde \rho\right) = \left\{\left(x_1, x_2\right) \in P_1 \times P_2 \vert \left\Vert \Tilde Rx_1+ \Tilde \rho - x_2\right\Vert_2 \le \xi\right\},
\end{equation}
where $\xi \in \mathbb{R}_{>0}$ is a user-defined parameter.
PESTO uses the following cost function to evaluate the size of the common area while also checking how well the points of the common area are aligned:
\begin{equation}
     J(\Tilde R, \Tilde \rho) = \text{Vol } \partial C\left(\Tilde R, \Tilde \rho\right)-\beta \sum_{(x_1,x_2) \in C\left(\Tilde R, \Tilde \rho\right) } \left\Vert \Tilde Rx_1 +\Tilde \rho - x_2 \right\Vert_2,
\end{equation}
where $\beta\in \mathbb{R}_{>0}$ is a user-defined parameter. A description on how we calculate $\text{Vol } \partial C$ can be found in Appendix A. Finally, PESTO returns the rigid transformation that maximizes the cost function $J$.

\subsection{Discussion on Algorithm}
We chose the form of the tetrahedron vector to make it invariant under rigid transformations; the signed volume term ensures that two tetrahedra related by a reflection have a last term that differs by sign. Additionally, filtering based on gap sizes renders the algorithm robust to noise by removing tetrahedra whose side ordering may change by a small amount of noise and reducing the number of erroneous matches due to sensor noise.

The heuristic (\ref{heuristic}) is based on necessary conditions on the first and second moments of two point clouds related by a rigid transformation, derived in our previous work on PASTA \cite{pasta}. We filter tetrahedra in two steps because calculating the distance between tetrahedron vectors in Step 2 is more computationally efficient than calculating the heuristic in Step 3. The asymptotic behavior of the algorithm's runtime as the size of the point clouds increases is dominated by the construction of the Delaunay Tetrahedralization, which for point clouds comprised of $n \in \mathbb{N}$ points has a computational complexity of $O(n\log n)$; PESTO therefore also has a computational complexity of $O(n\log n)$ for point clouds of size $n$. \footnote[3]{Using the KDTree data structure, searching for corresponding points to calculate the heuristic function and the common area scales at a rate of $O(n\log n)$.} The runtime of the algorithm can be made faster by downsampling the point clouds to use fewer points, which is already common practice when running point cloud registration algorithms in real-time applications, or by decreasing $m$ and $k$ to check fewer tetrahedron pairs.

\section{Worst-Case Error Bounds}
In addition to its rigid transformation estimate, PESTO provides theoretical worst-case guarantees on the error between its estimate and the true transformation. Proofs for Theorems 1 and 2 can be found in Appendix B.

PESTO's bounds are calculated using four point-to-point correspondences from the vertices of the matched tetrahedron pair used to calculate its rigid transformation estimate.
In the ideal case, the vertices of the matched tetrahedra provide a set of four point-to-point correspondences $\left\{(x_1,y_1), (x_2, y_2), (x_3, y_3), (x_4, y_4)\right\}$,  which are related exactly by the true rigid transformation $(R, \rho)$, i.e., $y_i = R x_i + \rho$.
When working with the real point clouds, the points $y_i$ are perturbed by ``noise'' $\epsilon_i \in B_\delta(0)$ and the correspondences become $\left\{(x_1, \hat y_1), (x_2, \hat y_2), (x_3, \hat y_3), (x_4, \hat y_4)\right\}$, where $\hat y_i = y_i+ \epsilon_i$. Note that the ``noise'' term $\varepsilon_i$ accounts for measurement noise as well as the quantization effects created by only being able to measure distances in a finite number of directions radiating out from the LiDAR sensor.

Theoretical guarantees on the quality of PESTO's rigid transformation estimate are based on the following assumption.

\textbf{Assumption 1:}
Let $(x_i,\hat{y}_i)$, $i=1,\hdots,4$, be the correspondences used by PESTO to compute its output. The ``noise'' terms $\varepsilon_i=\hat{y}_i-Rx_i-\rho$, where $(R,\rho)$ is the rigid transformation between the poses of the LiDAR sensor when the measurements were acquired, satisfy $\Vert \varepsilon_i\Vert\le \delta(\Vert x_i\Vert )$ for all $i=1,\hdots,4$ where $\delta: \mathbb{R}_{\ge 0} \rightarrow \mathbb{R}_{\ge 0}$ is a non-negative, non-decreasing function of the magnitude of the distance measurements.

Within Assumption 1, we encapsulate two concepts: first, that the magnitude of the ``noise" on LiDAR measurements is bounded but may increase with the distance, and second, that maximizing the common area will result in approximately correct correspondences. In practice, the first can be easily satisfied given knowledge of the LiDAR sensor characteristics and the environment in which it is deployed. The second relates to the size of the common area that will be found during a mission. The user needs to use their domain knowledge to determine if this is a reasonable assumption.

To calculate the rotation bound, we first define:
$$x_i'=x_i-\mu\left(\{x_1,\hdots,x_4\}\right), \quad y_i'=y_i-\mu\left(\{y_1,\hdots,y_4\}\right),$$
$$\hat y_i'=\hat y_i-\mu\left(\{\hat y_1,\hdots,\hat y_4\}\right),$$
and arrange these points into matrices $A=[x_1'|x_2'|x_3'|x_4']$, $B=[y_1'|y_2'|y_3'|y_4']$ , $\Delta B=[\hat y_1'|\hat y_2'|\hat y_3'|\hat y_4']-B$. Using these matrices we can upper bound the error in the rotation matrix as follows.

\begin{theorem}
\label{thm:rot}
Denote the singular values of the matrix $A$ by $\sigma_1 \ge \sigma_2 \ge \sigma_3 > 0$ and let  $\gamma\in \mathbb{R}_{>0}$ satisfy: 
\begin{equation}
\min_{R \in SO(3)} \left\Vert RA - B \right\Vert_F\le \gamma < \left(\sigma_{2}^2 +\sigma_3^2\right)^{\frac{1}{2}}-2\omega,
\end{equation}
where $\omega=\sqrt{\sum_{i=1}^4 \delta(\Vert x_i\Vert )^2 + \frac{3}{4}\left(\sum_{i=1}^4\delta(\Vert x_i\Vert )\right)^2}$. Under Assumption 1, the error in the rotation matrix produced by PESTO is upper bounded as follows: 
\begin{equation}
 \left\Vert R - \hat{R}\right\Vert_F\le 2 \sqrt{1-\left(1-\frac{\omega^2}{ \left( \left(\sigma_{2}^2+\sigma_3^2 \right)^{\frac{1}{2}} -\gamma\right)^2} \right)^\frac{1}{2}}.
 \end{equation}
\end{theorem}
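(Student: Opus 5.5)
The plan is to reduce everything to the relative rotation $Q = R^\intercal \hat R \in SO(3)$, where $\hat R$ is Horn's estimate
$$\hat R=\arg\min_{R'\in SO(3)}\Vert R'A-(B+\Delta B)\Vert_F .$$
Since $\Vert R-\hat R\Vert_F=\Vert I-Q\Vert_F$, it suffices to control the rotation angle $\theta\in[0,\pi]$ of $Q$. Writing $Q$ as a rotation by $\theta$ about a unit axis $u$, one has $\Vert I-Q\Vert_F^2=4(1-\cos\theta)$. The claimed bound is exactly $2\sqrt{1-\cos\theta}$ with $\cos\theta=\sqrt{1-\sin^2\theta}$. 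So the target is to show two things: first that $\theta\le\pi/2$, and second that
$$\sin\theta\le \frac{\omega}{(\sigma_2^2+\sigma_3^2)^{1/2}-\gamma}.$$

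\textbf{Step 1: noise bound.} We first show $\Vert \Delta B\Vert_F\le\omega$. Under Assumption 1, column $i$ of $\Delta B$ equals $\varepsilon_i-\bar\varepsilon$, where $\bar\varepsilon=\frac14\sum_j\varepsilon_j$. Hence
$$\Delta B=E-\bar\varepsilon\,\mathbf 1^\intercal, \qquad E=[\varepsilon_1|\cdots|\varepsilon_4].$$
Expanding $\Vert E-\bar\varepsilon\mathbf 1^\intercal\Vert_F^2$ and using $\Vert\varepsilon_i\Vert\le\delta(\Vert x_i\Vert)$, together with $\Vert\bar\varepsilon\Vert\le\frac14\sum_i\delta(\Vert x_i\Vert)$, yields a bound of the form $\sum_i\delta_i^2+c\,(\sum_i\delta_i)^2$. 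A crude triangle-type estimate of the cross term gives the constant $\frac34$ appearing in $\omega$.

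\textbf{Step 2: a lower bound from the geometry of $A$.} For any $a\in\mathbb R^3$, $\Vert (Q-I)a\Vert^2=2(1-\cos\theta)\Vert a-uu^\intercal a\Vert^2$. Summing over the columns of $A$ gives
$$\Vert (Q-I)A\Vert_F^2=2(1-\cos\theta)\left(\Vert A\Vert_F^2-\Vert u^\intercal A\Vert^2\right)\ge 2(1-\cos\theta)(\sigma_2^2+\sigma_3^2).$$
The last inequality holds because $\Vert u^\intercal A\Vert\le\sigma_1$. The same decomposition, with the component of $(Q-I)a$ orthogonal to the plane of rotation split off, should give an analogous lower bound in which $\sin\theta$ appears in place of $\sqrt{2(1-\cos\theta)}$. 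This is the quantity I will compare against the noise.

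\textbf{Step 3: upper bound via optimality of $\hat R$.} Let $R^\ast$ attain $\min_{R'}\Vert R'A-B\Vert_F\le\gamma$; this is $R$ itself when $B=RA$ exactly. Optimality of $\hat R$ gives
$$\Vert \hat RA-B-\Delta B\Vert_F\le\Vert R^\ast A-B-\Delta B\Vert_F\le\gamma+\omega.$$
Combining this with the triangle inequality, $\Vert(\hat R-R)A\Vert_F\le 2\omega+2\gamma$ (and in fact $\le2\omega+\gamma$ after a more careful accounting). Together with Step 2 and the hypothesis $\gamma<(\sigma_2^2+\sigma_3^2)^{1/2}-2\omega$, this excludes $\theta\ge\pi/2$. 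That justifies choosing the positive root $\cos\theta=\sqrt{1-\sin^2\theta}$.

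\textbf{Main obstacle.} The hardest step is obtaining the sharp $\sin\theta$ inequality with a single $\omega$ in the numerator and $(\sigma_2^2+\sigma_3^2)^{1/2}-\gamma$ in the denominator, rather than the weaker chordal estimate. I would try the first-order optimality condition of Horn's problem: $\hat R^\intercal(B+\Delta B)A^\intercal$ is symmetric, equivalently the skew part of $Q^\intercal R^\intercal(B+\Delta B)A^\intercal$ vanishes. Substituting $B=RA+(B-RA)$, one then projects onto the generator $[u]_\times$ of the rotation. The term coming from $RA$ produces $\sin\theta$ times $\mathrm{tr}$ of a projection of $AA^\intercal$, which is at least $\sigma_2^2+\sigma_3^2$. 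The residual $B-RA$ and the noise $\Delta B$ contribute error terms bounded by $\gamma$ and $\omega$ times $(\sigma_2^2+\sigma_3^2)^{1/2}$, via Cauchy--Schwarz. Dividing through yields the desired inequality. If this bookkeeping does not give exactly the stated constants, the fallback is to compare objective values of $\hat R$ along the geodesic from $R$ to $\hat R$; this also produces a $\sin\theta$ bound of the same shape.
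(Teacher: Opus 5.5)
Your Step 1 matches the paper's. After that the paper simply invokes Theorem 2.1 of S\"oderkvist's perturbation analysis of the orthogonal Procrustes problem, whereas you try to prove that bound directly from the stationarity condition of Horn's problem. That is a legitimate, self-contained alternative and the right idea. As written, however, the decisive inequality is only conjectured, and the bookkeeping you describe would not give the stated constants. Treating the residual $B-RA$ (of size at most $\gamma$) as an additive error term yields $\sin\theta\le(\omega+\gamma)/(\sigma_2^2+\sigma_3^2)^{1/2}$. Under the hypothesis $\gamma<(\sigma_2^2+\sigma_3^2)^{1/2}-2\omega$, this is strictly weaker than $\omega/((\sigma_2^2+\sigma_3^2)^{1/2}-\gamma)$. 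Likewise, Step 3 as argued only gives $\Vert(\hat R-R)A\Vert_F\le 2\omega+2\gamma$, which does not exclude $\theta\ge\pi/2$ once $\gamma>\sqrt2\,\omega$; the improvement to $2\omega+\gamma$ is asserted, not derived. Finally, Cauchy--Schwarz bounds the noise term by $\omega\Vert[u]_\times A\Vert_F=\omega\sqrt t$, where $t=\Vert A\Vert_F^2-\Vert u^\intercal A\Vert^2\ge\sigma_2^2+\sigma_3^2$, not by $\omega(\sigma_2^2+\sigma_3^2)^{1/2}$. The argument works only because the same $t$ multiplies $\sin\theta$.

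The missing observation is that here $B=RA$ exactly, since by definition $y_i=Rx_i+\rho$, so there is no residual at all. Stationarity of $\hat R$ says $\hat R^\intercal(B+\Delta B)A^\intercal$ is symmetric, hence orthogonal to $[u]_\times$. Using $Q[u]_\times=\cos\theta\,[u]_\times+\sin\theta\,(uu^\intercal-I)$, this reads
\[
\sin\theta\,t=\langle \Delta B,\hat R[u]_\times A\rangle\le\omega\sqrt t ,
\]
so $\sin\theta\le\omega/(\sigma_2^2+\sigma_3^2)^{1/2}$. Your Step 3 with zero residual gives $\Vert(\hat R-R)A\Vert_F\le 2\Vert\Delta B\Vert_F\le 2\omega$. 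Combined with Step 2, $\cos\theta\le 0$ would force $(\sigma_2^2+\sigma_3^2)^{1/2}\le\sqrt2\,\omega$, contradicting the hypothesis. Hence $\cos\theta=\sqrt{1-\sin^2\theta}$, and the claim follows because the right-hand side of the bound increases with $\gamma$.

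If you want S\"oderkvist's general setting with a genuine residual, also use that $R^\intercal BA^\intercal$ is symmetric for the unperturbed minimizer $R$. The identity then becomes $\sin\theta\,\langle R^\intercal BA^\intercal, I-uu^\intercal\rangle=\langle\Delta B,\hat R[u]_\times A\rangle$ with $\langle R^\intercal BA^\intercal, I-uu^\intercal\rangle\ge\sqrt t(\sqrt t-\gamma)$, which is exactly how $\gamma$ ends up in the denominator.

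Completed this way, your route gives a self-contained proof and shows that $\gamma$ is superfluous in this theorem. The paper's one-line citation is shorter but rests entirely on the external result.
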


To provide a bound for the translation error, we introduce the alignment error $\theta_i =\hat{y}_i-\hat R x_i - \hat \rho$ for the pairs $(x_i,\hat{y}_i)$, $i=1,\hdots,4$ and define:
$$w_{ij} = y_i - y_j, \quad\hat w_{ij} = \hat y_i - \hat y_j, \quad z_{ij} = x_i –x_j. \quad \forall i<  j$$

\begin{theorem}
\label{thm:trans}
Under Assumption 1, the error in the translation vector produced by PESTO is upper bounded as follows:  
    \begin{multline}
        \Vert \rho - \hat \rho\Vert_2 \le \frac{1}{4}\left(\sum_{i=1}^4\delta(\Vert x_i \Vert ) \right.\\ \left.+ \sum_{j<k} \left\vert \alpha_{jk}^{(i)}\right\vert \left(\delta(\Vert x_j\Vert) + \delta(\Vert x_k\Vert ) + \left\Vert \theta_j \right\Vert_2 + \left\Vert \theta_k\right\Vert_2 \right)\right), 
    \end{multline}
    where $\alpha^{(i)}_{jk} \in \mathbb{R}$ are selected so that $x_i = \sum_{j<k} \alpha_{jk}^{(i)} z_{jk}$.
\end{theorem}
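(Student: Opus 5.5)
The plan is to write the translation error exactly in terms of quantities the statement controls: the noise $\varepsilon_i$, the alignment residuals $\theta_i$, and the rotation error acting on the $x_i$. Then we expand the rotation-error term through the edge vectors $z_{jk}$ and finish with the triangle inequality. From the definitions, $\hat y_i = R x_i + \rho + \varepsilon_i$ and $\theta_i = \hat y_i - \hat R x_i - \hat\rho$. Subtracting these gives, for every $i$,
\begin{equation*}
\rho - \hat\rho = \theta_i - \varepsilon_i + (\hat R - R)x_i .
\end{equation*}
Averaging over $i=1,\hdots,4$ yields $\rho-\hat\rho = \frac{1}{4}\sum_{i=1}^4\left(\theta_i - \varepsilon_i + (\hat R - R)x_i\right)$.

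The first key step is to eliminate the $\theta_i$ from this average. Horn's method \cite{horn} returns the translation $\hat\rho = \mu(\{\hat y_1,\hdots,\hat y_4\}) - \hat R\,\mu(\{x_1,\hdots,x_4\})$. Hence $\sum_{i}\theta_i = \sum_i \hat y_i - \hat R\sum_i x_i - 4\hat\rho = 0$. This leaves
\begin{equation*}
\rho-\hat\rho = \frac{1}{4}\sum_{i=1}^4\left(-\varepsilon_i + (\hat R - R)x_i\right).
\end{equation*}
This is the step I expect to need the most care. The claimed bound contains no standalone $\Vert\theta_i\Vert$ term, so the argument genuinely relies on PESTO using Horn's closed-form centroid translation. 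I would state this property of the estimator explicitly.

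The second step handles $(\hat R - R)x_i$ without invoking Theorem~\ref{thm:rot}. We write $x_i = \sum_{j<k}\alpha^{(i)}_{jk} z_{jk}$. Such coefficients exist when the $z_{jk}$ span $\mathbb{R}^3$, which holds for a nondegenerate tetrahedron. For each edge we have $R z_{jk} = y_j - y_k = w_{jk}$. Also, from the definition of $\theta$,
\begin{equation*}
\hat R z_{jk} = (\hat y_j - \theta_j - \hat\rho) - (\hat y_k - \theta_k - \hat\rho) = \hat w_{jk} - \theta_j + \theta_k .
\end{equation*}
Since $\hat w_{jk} - w_{jk} = \varepsilon_j - \varepsilon_k$, it follows that $(\hat R - R)z_{jk} = \varepsilon_j - \varepsilon_k - \theta_j + \theta_k$.

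Finally we apply the triangle inequality to the averaged expression. Assumption~1 gives $\Vert\varepsilon_i\Vert_2\le\delta(\Vert x_i\Vert)$, which bounds $\Vert\varepsilon_i\Vert$ by $\delta(\Vert x_i\Vert)$. It also gives $\Vert(\hat R - R)z_{jk}\Vert_2 \le \delta(\Vert x_j\Vert)+\delta(\Vert x_k\Vert)+\Vert\theta_j\Vert_2+\Vert\theta_k\Vert_2$, which we weight by $|\alpha^{(i)}_{jk}|$. Summing over $i$ and over $j<k$ gives exactly the stated bound.
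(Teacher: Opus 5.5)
Your proposal is correct and takes essentially the same route as the paper's proof. Both arguments get $\rho-\hat\rho=\frac{1}{4}\sum_{i=1}^4\left(-\varepsilon_i+(\hat R-R)x_i\right)$ from Horn's centroid formula for $\hat\rho$ (you via $\sum_i\theta_i=0$, the paper by subtracting the two centroid expressions directly), then expand $x_i=\sum_{j<k}\alpha^{(i)}_{jk}z_{jk}$ and bound $(\hat R-R)z_{jk}$ through $w_{jk}$, $\hat w_{jk}$ and $\theta_j,\theta_k$ with the triangle inequality. Your identity $(\hat R-R)z_{jk}=\varepsilon_j-\varepsilon_k-\theta_j+\theta_k$ is correctly signed where the paper writes $\epsilon_j+\epsilon_k$, which is harmless after the triangle inequality.
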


To get a tighter translation bound, we use the point-to-point correspondences collected when computing $C(\hat R, \hat \rho)$ to create a basis of vectors $z_{jk}$ that reduce the magnitude of the coefficients $\alpha^{(i)}_{jk}$.

\section{Experiments}


We first compare PESTO’s registration performance against other point cloud registration algorithms on a real-world dataset, the KITTI Odometry Benchmark~\cite{KITTI}, on which 
we also experimentally validate PESTO's worst-case error bounds. We evaluate the relative rotation error (RRE) and relative translation error (RTE), defined as follows:
$$RTE(\rho_1, \rho_2) = \left\Vert \rho_1 - \rho_2 \right\Vert_2,$$
$$RRE(R_1, R_2) = \arccos \frac{\text{tr}\left(R_1^\intercal R_2\right)-1}{2}.$$
We implemented PESTO in Python 3, relying on SciPy's implementation of the Delaunay Tetrahedralization algorithm. For all experiments, the parameters for PESTO were set as follows: $\zeta=0.005$, $\nu=50$, $\eta=10$, $\beta=0.0001$, $\xi=0.02$, $\gamma=0.018$, $m=30$, and $k=5$. We scale the point clouds to be within the interval $[-1,1]^3$. We compare PESTO against other model-based algorithms used for point cloud alignment in LiDAR localization applications: ICP, RANSAC, and TEASER. 
TEASER cannot run with large amounts of points due to memory constraints and requires correspondences to be known ahead of time, so we have used Fast Point Feature Histograms (FPFH) to determine correspondences; RANSAC is also fed correspondences from FPFH for better performance and faster runtime. For the best performance, we did not set a runtime cap on RANSAC and ICP. As the other tools do not exploit any knowledge about the motion of the vehicle, ICP is provided a random transformation as an initial guess, with a rotation sampled uniformly from $SO(3)$ and a translation uniformly sampled from $[-1,1]^3$.

\begin{figure}
    \centering
    \includegraphics[width=\linewidth]{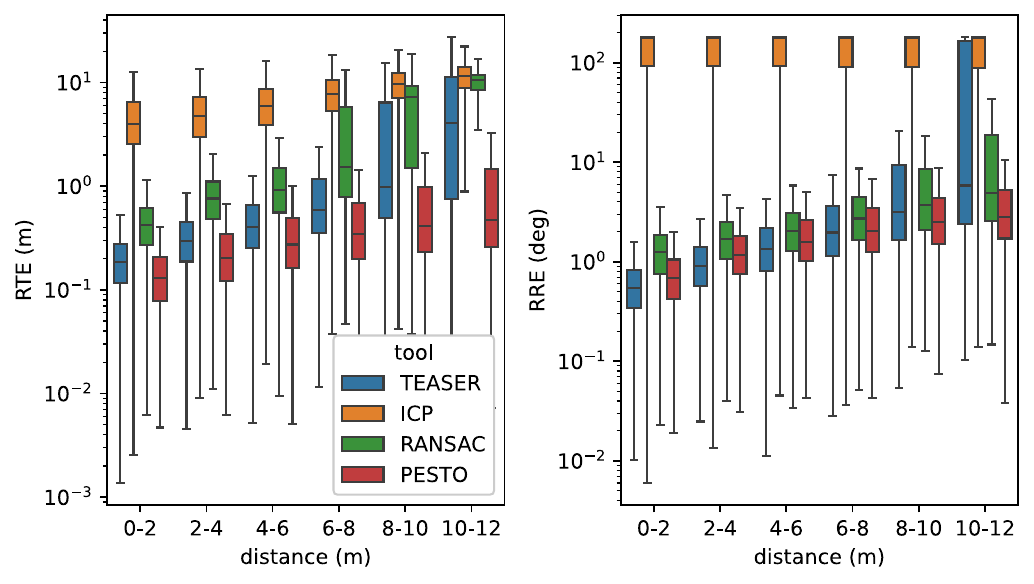}
    \caption{A comparison of PESTO's performance on the KITTI Odometry Benchmark.}
    \label{fig:kitti}
\end{figure}

For each LiDAR scan, we downsample each point cloud uniformly by keeping every tenth point, creating a point cloud comprised of around 10,000 points. Feature extraction further reduces the number of points to less than 1,000 pairs of corresponding points used to generate estimates with RANSAC and TEASER. A pair of scans are correctly aligned if $RRE < 5^\circ$ and $RTE < \frac{1}{2}+\frac{1}{6}\left\lfloor \frac{\left\Vert \rho \right\Vert}{2} \right\rfloor$ meters, where $\rho$ is the true translation between the scans and $\left\lfloor \frac{\left\Vert \rho \right\Vert}{2} \right\rfloor$ is the floor of $\frac{\left\Vert \rho \right\Vert}{2}$. Note that the threshold for $RTE$ grows with distance. We measure recall as the percent of correctly aligned scans. 

For each of the 11 sequences, the tools align each scan with each of the 15 subsequent scans (recall that scans are taken along the path followed by the LiDAR equipped vehicle). We sorted the pairs of scans aligned into bins based on how far the vehicle moved between scans, using this as an approximate measure of the amount of overlap between the scans.

Results from the experiments on the KITTI Odometry Benchmark can be found in Table \ref{table:recall} and in Figure \ref{fig:kitti}. PESTO outperforms the other tools in registration recall at every distance between scans, and its performance degrades more slowly than the other tools as the distance between scans increases. TEASER matches PESTO’s performance at smaller distances but correctly aligns only half as many scans as PESTO when the distance between scans is between 10 and 12 meters. RANSAC, which relies on features from FPFH like TEASER, has a similar drop in performance as TEASER but performs worse than TEASER at every distance. Similarly to its performance on the Stanford models, ICP is largely unable to find the correct alignment.

On the KITTI dataset, PESTO had an average runtime of $0.32$ seconds, compared to $0.07$ seconds for TEASER, $0.05$ seconds for RANSAC, and $0.83$ seconds for ICP. While PESTO has a longer runtime than TEASER and RANSAC, we note that PESTO is using more than 10 times the number of points and that we have not yet optimized our algorithm for speed as the other algorithms presumably have been; faster runtimes for PESTO can be achieved either by further downsampling the point clouds or by using features at the cost of a slight decrease in performance. Furthermore, we believe that PESTO might able to run at a lower frequency given its better recall at larger distances.

\begin{table}[H]
\caption{Registration recall on the KITTI Odometry Benchmark.}
\label{table:recall}
\begin{tabular}{l|llll}
Distance (m) & TEASER & RANSAC & ICP   & PESTO \\ \hline
0-2          & 0.973  & 0.731  & 0.134 & \textbf{0.976} \\
2-4          & 0.928  & 0.499  & 0.116 & \textbf{0.936} \\
4-6          & 0.850  & 0.496  & 0.069 & \textbf{0.878} \\
6-8          & 0.708  & 0.354  & 0.034 & \textbf{0.825} \\
8-10         & 0.528  & 0.205  & 0.011 & \textbf{0.755} \\
10-12        & 0.352  & 0.093 & 0.002 & \textbf{0.692}
\end{tabular}

\end{table}

We validated PESTO’s performance bounds on a sequence of 100 scans from the KITTI Odometry Benchmark sequence 00, as shown in Figure \ref{fig:bounds}. Each scan is aligned with the third subsequent scan. To calculate the bounds, we modeled the noise on the LiDAR measurements as an affine function of the magnitude of the distance measurement $\delta(r)=0.1r + 0.02$. We converted the norm of the difference between the true and estimated rotation matrices into degrees for ease of interpretation. The error bounds are relatively conservative but are tighter than the bounds in the Stanford Bunny experiment because the tetrahedra used to align the point clouds in KITTI have larger singular values, which tightens the bound. The spikes in the bound and error correspond to when the tetrahedra used to calculate the estimate have less similar tetrahedra vectors.

\begin{figure}
    \centering
    \includegraphics[width=\columnwidth]{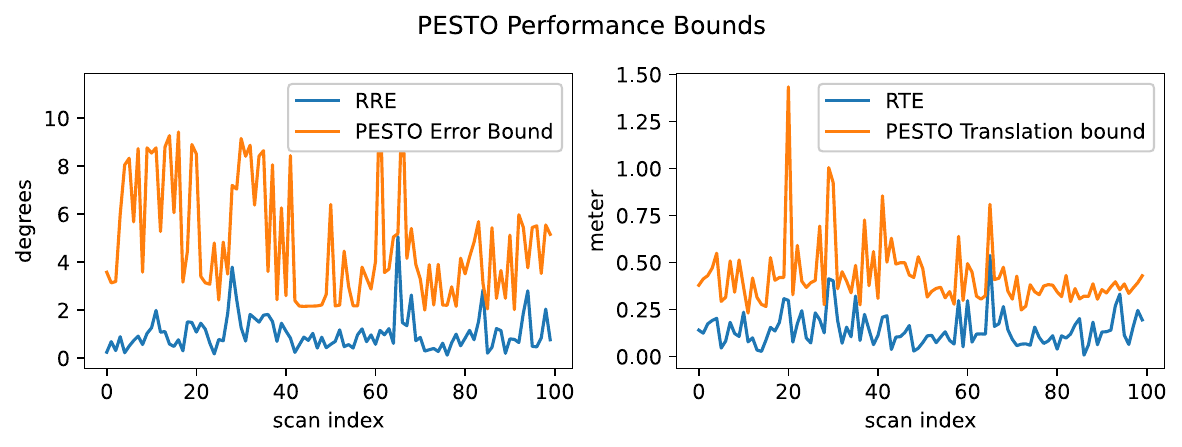}
    \caption{PESTO's bounds as evaluated on a sequence of 100 scans from the KITTI Odometry Benchmark.}
    \label{fig:bounds}
\end{figure}

\subsection{Comparison to Neural Network Registration}
We compared PESTO to GeoTransformer \cite{geotransformer}, a neural network point cloud registration tool designed to align point clouds with limited overlap, on the KITTI 10m Benchmark used to evaluate GeoTransformer. For this experiment, we defined a correct rigid transformation as incurring an error within 10 degrees and 3 meters. GeoTransformer outperforms PESTO with a registration recall of 99\% to PESTO's 85\% and has an average runtime of $0.35$ seconds. However, GeoTransformer sometimes provides completely wrong alignments even for pairs of point clouds that PESTO aligns correctly. In Figure \ref{fig:neural_comparison}, we compare the performance of PESTO and GeoTransformer on a sequence of scans from KITTI sequence 7, starting at scan 2 and aligning each subsequent scan with the first scan and stopping when the vehicle has moved 15 meters from its starting position. For most scans, both tools find a correct rigid transformation, but GeoTransformer has slightly better performance, providing rigid transformation estimates with smaller error. However, towards the end of the sequence, GeoTransformer returns rigid transformations that are wildly incorrect; a visualization of the estimated rigid transformation in one of these cases can be seen in Figure \ref{fig:kitti_example}. Given that PESTO correctly aligned the point clouds, it appears that there are enough corresponding features between the point clouds to find a correct rigid transformation, and considering GeoTransformer is trained specifically on a dataset comprised of point clouds at least 10 meters apart, there is no discernible reason why GeoTransformer could not align the point clouds. PESTO, with its error bounds and model-based algorithm, is more interpretable and arguably a better choice for safety-critical applications. With its worst-case error bounds, PESTO can supervise GeoTransformer's rigid transformation estimates, using its worst-case error bounds to discard estimates that lie outside its error bounds and replace them with a transformation within the error bounds. 

\begin{figure}
    \centering
    \includegraphics[width=\linewidth]{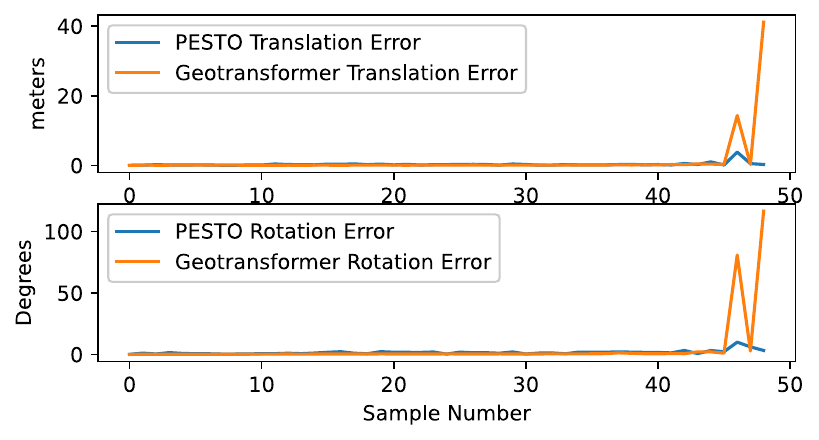}
    \caption{A comparison of alignment error from PESTO and neural network GeoTransformer aligning a sequence of point clouds from KITTI}
    \label{fig:neural_comparison}
\end{figure}

\section{Conclusion and Future Work}
We propose PESTO, a LiDAR point cloud registration algorithm with formal worst-case error bounds, and show that PESTO performs well on real-world point cloud registration problems, especially as the overlap between the point clouds shrinks. PESTO's worst-case error bounds provide invaluable information on the uncertainty in its estimate, making PESTO's estimates more interpretable and useful in real-world, safety-critical applications; future work could leverage these error bounds to supervise neural network registration tools, combining the performance of neural networks with the formal guarantees provided by PESTO. We are currently working to implement PESTO in C++ to reduce its running time for use on larger datasets under real-time constraints.





\section*{APPENDIX}

\subsection{Calculating $Vol(\partial C)$}

We will use the notation $C_1(R, \rho)$ to denote the projection of $C$ in $P_1$:

\begin{equation*}
    C_1(R,\rho) =\left\{x_1 \in P_1\vert (x_1, x_2) \in C(R, \rho)\right\}.
\end{equation*}
We define a triangle as a polygon with three vertices and three edges and define the interior of a triangle $\tau$ with vertices $(v_1, v_2, v_3)$ as the set:
$$I(\tau) =\left\{x \in \mathbb{R}^3 \vert x = \sum_{i=1}^3 \alpha_i v_i, 0<\alpha_i <1, \sum_{i=1}^3 \alpha_i = 1\right\}.$$

We will approximate $Vol\left(\partial C\right)$ as the surface area of $\partial C$. First, we assign an ordering to the points in $C_1\left(R,\rho\right)$ and denote the $i^{th}$ element of $C_1\left(R,\rho\right)$ by $C_1(i)$. We define points $C_1(i)$ and $C_1(j)$ as neighbors if the distance
$$\left\Vert \frac{C_1(i)}{\left\Vert C_1(i) \right\Vert_2} - \frac{C_1(j)}{\left\Vert C_1(j) \right\Vert_2}\right\Vert_2$$
is below threshold $\gamma\in \mathbb{R}_{>0}$, where $\gamma$ is a design parameter. 
To approximate the volume of the boundary of $C_1\left(R, \rho\right)$, we triangulate $C_1(R, \rho)$ by creating triangles out of neighboring points and calculate the surface area using the sum of their areas. We construct the set $\Phi(C_1)$ of triangles with the following properties:
\begin{enumerate}
    \item For any triangle in $\Phi(C_1)$, each of its vertices are neighbors of the other two vertices;
    \item For any two triangles in $\Phi(C_1)$, the intersection of their interiors is empty;
    \item For any point $C_1(i)$, the ray from the origin to through $C_1(i)$ does not intersect the interior of any triangle in $\Phi$, i.e., there does not exist constant $\omega\in \mathbb{R}_{>0}$ such that $\omega C_1(i) \in I(\tau), \forall \tau\in \Phi(C_1)$;
    \item No edges can be added between neighboring points $C_1(i)$ and $C_1(j)$ without intersecting another edge.
\end{enumerate}

$\text{Vol } \partial C$ is approximated as the combined area of all the triangles in $\Phi(C_1)$:
\begin{equation}
    \text{Vol } \partial C_1\left(R, \rho\right)=\sum_{\tau\in \Phi(C_1)} Area(\tau),
\end{equation}
where $Area(\tau)$ is the area of the triangle $\tau$.

\subsection{Proofs for Formal Worst-Case Guarantees}

\textit{Proof of Theorem \ref{thm:rot}:}

Note that we can recover $R$ and $\hat R$ using $A$, $B$, and $\Delta B$ as follows:
$$ R = \text{arg } \min_{R \in SO(3)} \left\Vert RA - B\right\Vert_F,$$
$$\hat R = \text{arg } \min_{R \in SO(3)} \left\Vert RA - (B+\Delta B)\right\Vert_F,$$
where Horn's method provides the solution to the above minimization problems~\cite{horn}. 

We can bound the difference between $y_i'$ and $\hat y_i'$ as follows:
\begin{align*}
    \left\Vert y_i' -\hat y_i'\right\Vert_2  &= \left\Vert y_i - \frac{1}{4}\sum_{j=1}^4y_j -\hat y_i +\frac{1}{4}\sum_{j=1}^4  \hat y_j \right\Vert_2 \\
    & = \left\Vert -\varepsilon_i + \frac{1}{4} \sum_{i=1}^4\epsilon_j \right\Vert_2 \\
    & \le \left \Vert \varepsilon_i\right\Vert_2 + \frac{1}{4}\sum_{j=1}^4\left\Vert \varepsilon_j \right\Vert_2 \\
    &\le \delta(\Vert x_i\Vert) + \frac{1}{4}\sum^4_{j=1}\delta(\Vert x_j\Vert).
\end{align*}

Using the above expression, we can bound the perturbation $\Delta B$ as follows:

\begin{align*}
    \left\Vert \Delta B \right\Vert_F &= \sqrt{\sum_{i=1}^4 (y_i'-\hat y_i')^\intercal (y_i'-\hat y_i)} \\
    &\le\sqrt{\sum_{i=1}^4 \left(\delta(\Vert x_i\Vert) + \frac{1}{4}\sum_{j=1}^4\delta(\Vert x_j\Vert)\right)^2} \\
    &\le\sqrt{\sum_{i=1}^4 \delta(\Vert x_i\Vert)^2 + \frac{3}{4}\left(\sum_{i=1}\delta(\Vert x_i\Vert)\right)^2} = \omega .
\end{align*}

A bound on the rotation error, $\Vert R-\hat R\Vert_F$ is now directly obtained from Theorem 2.1 page 2 in~\cite{Söderkvist1993} where we used Assumption 1 to replace $\varepsilon_B$ in \cite{Söderkvist1993} with $4\delta$. 
\QED

\textit{Proof of Theorem \ref{thm:trans}:}

    Using all the correspondences $(x_i, y_i)$, we can express $\rho$ as follows: 
    \begin{equation}
        \rho = \frac{1}{4}\sum_{i=1}^4\left( y_i - Rx_i\right).
        \label{ideal}
    \end{equation}
    
    Using Horn's method, we calculate $\hat \rho$ using the following expression:
    \begin{equation}
        \hat \rho = \frac{1}{4}\sum_{i=1}^4\left( \hat y_i - \hat Rx_i\right).
        \label{horn}
    \end{equation}
    
    Subtracting \eqref{ideal} from \eqref{horn}, we obtain: 
    \begin{align*}
        \left\Vert \hat \rho - \rho \right\Vert_2  &= \frac{1}{4} \left\Vert   \sum_{i=1}^4 \hat y_i - y_i - \hat Rx_i + Rx_i\right\Vert_2 \\
        & = \frac{1}{4}\left \Vert \sum_{i=1}^4 \epsilon_i + (R-\hat R) x_i\right\Vert_2  \\
        & \le \frac{1}{4} \sum_{i=1}^4 \left\Vert \epsilon_i\right\Vert_2  + \sum_{i=1}^4 \left\Vert (R-\hat R) x_i\right\Vert_2  \\
        & \le \frac{1}{4}\sum_{i=1}\delta(x_i) + \frac{1}{4}\sum _{i=1}^4\left\Vert (R-\hat R)x_i\right\Vert_2 .
    \end{align*}

    To bound $\left\Vert \left(R-\hat R\right) x_i\right\Vert$, we can rewrite each $x_i$ as a linear combination of vectors $z_{jk}$:
    \begin{align*}
        \left\Vert\sum _{i=1}^4\left(R-\hat R\right)x_i\right\Vert_2 &= \left\Vert \sum _{i=1}^4 \left(R-\hat R\right)\left(\sum _{j < k}\alpha_{jk}^{(i)} z_{jk}\right)\right\Vert_2 \\
        &\le \sum_{i=1}^4\sum_{j<k} \left\vert \alpha_{jk}^{(i)} \right\vert \left\Vert \left(R-\hat R\right) z_{jk} \right\Vert_2.
    \end{align*}

    To bound $\left\Vert \left(R-\hat R\right)z_{jk}\right\Vert_2$, first note that $w_{ij}$ and $z_{ij}$ are related by:
    \begin{equation}
       w_{ij} = y_i - y_j = Rx_i + \rho - Rx_j - \rho = Rz_{ij},
       \label{ideal_z_w}
    \end{equation}
    and similarly relate $\hat w_{ij}$ and $z_{ij}$ by:
    \begin{align}
       \hat w_{ij} &= \hat y_i - \hat y_j   \notag \\
       &= \hat Rx_i +\hat \rho + \theta_i - \hat Rx_j - \hat \rho - \theta_j  \notag \\
       &=\hat Rz_{ij} + \theta_i - \theta_j. \label{z_w}
    \end{align}
    Using \eqref{z_w} and \eqref{ideal_z_w}, we get the following expression bounding $\left\Vert (R-\hat R)z_{jk}\right\Vert_2$:
    \begin{align*}
        \left\Vert R z_{jk}- \hat R z_{jk} \right\Vert_2 &= \left\Vert w_{jk} - w_{jk} + \epsilon_j- \epsilon_k - \theta_j + \theta_k \right\Vert_2 \\
        & = \left\Vert \epsilon_j + \epsilon_k - \theta_j + \theta_k \right\Vert_2 \\
        &\le \left\Vert \epsilon_j \right\Vert_2  + \left\Vert \epsilon_k\right\Vert_2 + \left\Vert \theta_j\right\Vert_2 + \left\Vert \theta_k\right\Vert_2  \\
        &\le  \delta(x_i) + \delta(x_j) + \left\Vert \theta_j \right\Vert_2 + \left\Vert \theta_k\right\Vert_2 .
    \end{align*}

    Combining these together, we get the final bound:
    \begin{align*}
        \left\Vert \hat \rho - \rho \right\Vert_2  &\le \frac{1}{4}\sum\delta(\Vert x_i\Vert ) + \frac{1}{4}\left\Vert\sum _{i=1}^4\left(R-\hat R\right)x_i\right\Vert_2 \\
        &\le \frac{1}{4}\sum_{i=1}\delta(\Vert x_i\Vert ) + \frac{1}{4}\sum_{i=1}^4 \sum_{j<k} \left\vert \alpha_{jk}^{(i)}\right\vert \left\Vert\left(R- \hat R\right) z_{jk} \right\Vert_2  \\
        &\le \frac{1}{4}\sum_{i=1}^4\delta(\Vert x_i\Vert ) +\sum_{i=1}^4 \sum_{j<k} \left\vert \alpha_{jk}^{(i)}\right\vert \left(\delta(\Vert x_j\Vert) \right. \\ 
        &  \qquad + \left. \delta(\Vert x_k\Vert ) + \left\Vert \theta_j \right\Vert_2 + \left\Vert \theta_k\right\Vert_2 \right).
    \end{align*}

\QED


\bibliographystyle{IEEEtran} 
\bibliography{bib}

\end{document}